\newtheorem{proposition}{Proposition}
\newtheorem{remark}{Remark}
\title{Beamforming Design and Performance Evaluation for RIS-aided Localization using LEO Satellite Signals}
\name{Lei~Wang, Pinjun~Zheng, Xing~Liu, Tarig~Ballal, Tareq~Y.~Al-Naffouri}
\address{King Abdullah University of Science and Technology, Thuwal, Saudi Arabia.}
\begin{document}
%\ninept
%
\maketitle
\begin{abstract}
The growing availability of low-Earth orbit (LEO) satellites, coupled with the anticipated widespread deployment of reconfigurable intelligent surfaces (RISs), opens up promising prospects for new localization paradigms. This paper studies RIS-aided localization using LEO satellite signals. The Cram\'er-Rao bound of the considered localization problem is derived, based on which an optimal RIS beamforming design that minimizes the derived bound is proposed. Numerical results demonstrate the superiority of the proposed beamforming scheme over benchmark alternatives, while also revealing that the synergy between LEO satellites and RISs holds the promise of achieving localization accuracy at the meter or even sub-meter level.

\end{abstract}
\begin{keywords}
LEO satellite, localization, reconfigurable intelligent surfaces, beamforming, Cram\'er-Rao bound
\end{keywords}
\section{Introduction}
\label{sec:intro}
In recent years, low-Earth orbit (LEO) satellites, which are typically deployed at altitudes from 500 to 2000 km~\cite{Li2022LEO}, have received considerable attention. Although existing LEO constellations were not originally intended for localization, recent studies have increasingly recognized the capability of LEO satellite localization, either through the dedicated satellite systems or by utilizing signals of opportunity~\cite{kassas2019new,reid2018broadband}. LEO satellites show the potential to be a complement or an alternative to the global navigation satellite systems (GNSS) that reside in medium-Earth orbit (MEO), thanks to their desirable attributes such as stronger received signal power, more visible satellites, and higher frequency diversity, etc~\cite{Kozhaya2023Bilnd,dureppagari2023ntn}.

Within the existing literature, numerous works have delved into integrating terrestrial and non-terrestrial localization systems~\cite{Zheng20235G,Zheng2023Attitude}.
As an emerging technique, reconfigurable intelligent surfaces (RISs) bring new opportunities for both terrestrial and non-terrestrial networks-based localization~\cite{Bjornson2022Reconfigurable,Tekb2022Reconfigurable}. In terrestrial wireless systems, it has been shown that RIS can improve localization accuracy by reshaping the propagation environment, providing additional location references, and delivering more measurements~\cite{ Henk2022Beamforming,Zheng2023JrCUP,Chen2023Multi}. Nonetheless, the non-terrestrial network-based localization with RIS involved is still in its early stages. 

Current LEO satellite signals primarily operate in the Ku/Ka band, closely aligned with the mmWave frequencies used in terrestrial 5G communications. This similarity inspires our research into leveraging ground RISs to enhance LEO satellite localization. The nature of LEO satellite signals, such as long propagation distance, high Doppler shifts, and abundant satellite resources, distinguishes them from the terrestrial systems~\cite{Rui2022LEO}. In this paper, we first derive the Cram\'er-Rao bound (CRB) for RIS-aided LEO satellite localization, based on which a localization-oriented RIS beamforming design is proposed, which is shown to outperform existing alternatives. We also compare the performance of LEO satellite localization to that of terrestrial base stations (BSs), revealing the potential of LEO satellite localization.

\section{System Model}
As shown in Fig.~\ref{fig1}, we consider an RIS-aided downlink localization system with a single-antenna user equipment (UE) at unknown location $\mathbf{p}_{\mathrm{u}}\in \mathbb{R}^{3\times 1}$, an RIS at known location $\mathbf{p}_{\mathrm{r}}\in \mathbb{R}^{3\times 1}$, and a LEO satellite with a known position $\mathbf{p}_{\mathrm{s}}\in \mathbb{R}^{3\times 1}$ and velocity $\boldsymbol{v}\in \mathbb{R}^{3\times 1}$. The orientations of the LEO satellite and RIS are assumed to be known.
Besides, the satellite and RIS are equipped with uniform planar arrays (UPAs) of $N_\mathrm{s} = N_\mathrm{s}^x N_\mathrm{s}^y$ and $N_{\mathrm{r}} = N_{\mathrm{r}}^x N_{\mathrm{r}}^y$ elements, respectively. $N_\mathrm{s}^x$ and $N_\mathrm{s}^y$ represent the numbers of elements on the
x- and y-axes of the UPA on LEO satellite, respectively. Similarly, $N_{\mathrm{r}}^x$ and  $N_{\mathrm{r}}^y$ define the shape of the RIS UPA. The element spacings of the two UPAs are half wavelength. The UE locates itself by exploiting the received signals from the direct satellite-UE path and cascaded satellite-RIS-UE path. This is a single satellite and single RIS scenario. The cases with multiple satellites/RISs will be discussed in Section~\ref{sec:multLEO}.

\begin{figure}[t]
  \centering
  \includegraphics[width=2.in]{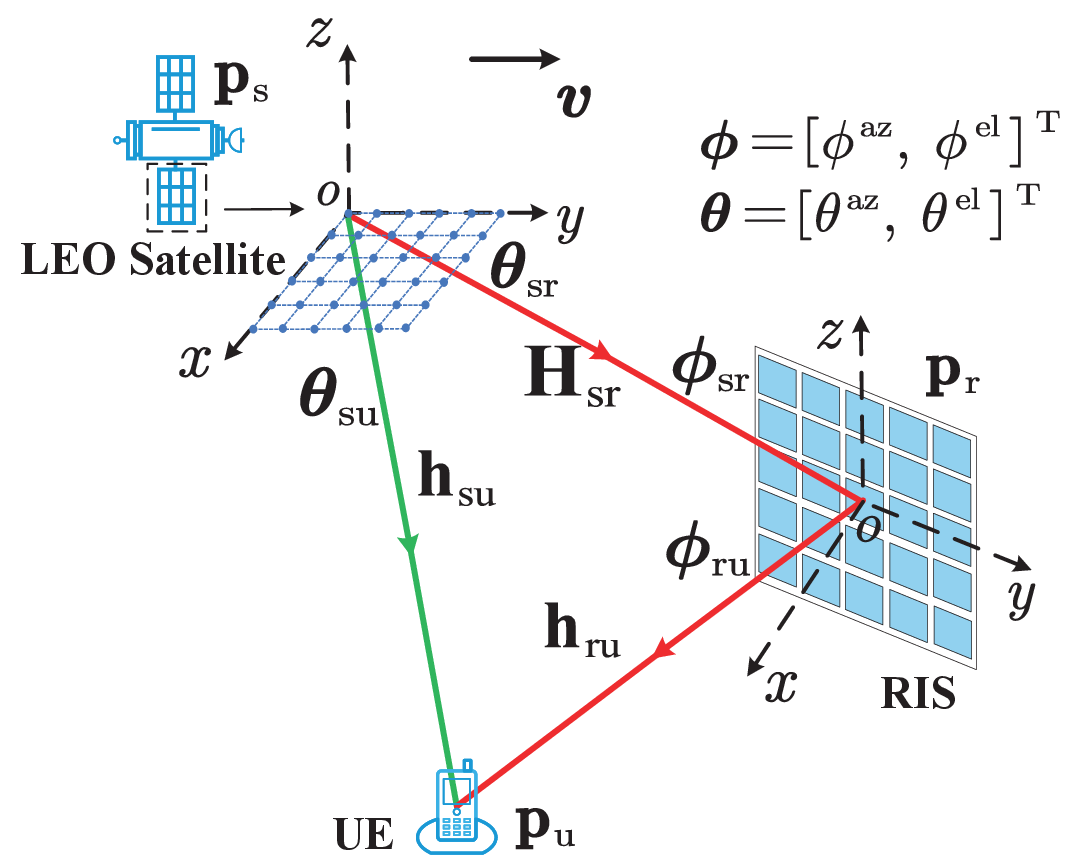}
  \caption{The considered setup with LEO satellite, RIS and UE.}
  \label{fig1}
\end{figure}

\subsection{Channel Model}
The considered system is depicted in Fig.~\ref{fig1}, where the downlink channel can be represented by
\begin{equation}
\mathbf{h}^\mathrm{T}(t,f)=\mathbf{h}^\mathrm{T}_{\mathrm{su}}(t,f)+\mathbf{h}^\mathrm{T}_{\mathrm{ru}}(f)\mathbf{\Omega}\mathbf{H}_{\mathrm{sr}}(t,f),
\end{equation}
where $\mathbf{h}_{\mathrm{su}}(t,f) \in \mathbb{C}^{N_{\mathrm{s}} \times 1}$ denotes the channel from the satellite to the UE, $\mathbf{H}_{\mathrm{sr}}(t,f) \in \mathbb{C}^{N_{\mathrm{r}} \times N_{\mathrm{s}}}$ represents the channel from the satellite to the RIS, $\mathbf{h}_{\mathrm{ru}}(f) \in \mathbb{C}^{N_{\mathrm{r}} \times 1}$ is the channel from the RIS to the UE, and $\mathbf{\Omega} =\mathrm{diag}(\boldsymbol{\omega})\in \mathbb{C}^{N_{\mathrm{r}} \times N_{\mathrm{r}}}$ denotes RIS reflection matrix with $\boldsymbol{\omega} = [\omega_1,\ldots,\omega_{N_\mathrm{r}}]^\mathrm{T}\in \mathbb{C}^{N_{\mathrm{r}} \times 1}$. 
Here, $\omega_{n_\mathrm{r}}$ denotes the reflection coefficient of the $n_\mathrm{r}$-th RIS element, which can be characterized as $\omega_{n_r}=\frac{Z_{n_r}-Z_0}{Z_{n_r}+Z_0}$ with 
$Z_{n_r}$ being the reconfigurable impedance and $Z_0$ being the reference impedance (typically $Z_0=50~\Omega$)~\cite{Shen2022RIS}.\footnote{Here we consider a single connected reconfigurable impedance network.} For a passive RIS setup, we have $\Re(Z_{n_r})\geq 0$, which follows the constraint 
\begin{equation}\label{eq:omegaconstraint}
    \|\boldsymbol{\omega}\|_\infty\leq 1.
\end{equation}
Given the lack of a shared central processing unit (CPU) between LEO satellites and ground devices, we assume that the RIS and LEO satellites operate in a non-cooperative mode. Thus, unlike the terrestrial scenario~\cite{Henk2022Beamforming}, $\boldsymbol{\Omega}$ remains unchanged throughout the LEO satellite signal transmissions.

The satellite-UE channel can be represented by~\cite{You2020Model} \footnote{For simplicity, we ignore the multipath effect, which will be the subject of future exploration.}
\begin{equation}
\mathbf{h}_{\mathrm{su}}(t,f)=\alpha_{\mathrm{su}}e^{j2\pi(t\nu_{\mathrm{su}}-f\tau_{\mathrm{su}})}\mathbf{a}_{\mathrm{s}}(\boldsymbol{\theta}_{\mathrm{su}}),
\end{equation}
where $\alpha_{\mathrm{su}}$, $\nu_{\mathrm{su}}$, and $\tau_{\mathrm{su}}$ denote the complex-valued channel gain, the Doppler shift, the propagation delay from the satellite to the UE, respectively.
$\boldsymbol{\theta}_{\mathrm{su}}=\left[\theta_{\mathrm{su}}^{\mathrm{az}}, \theta_{\mathrm{su}}^{\mathrm{e l}}\right]^{\mathrm{T}}$ represents the angle-of-departure (AoD) for the satellite to the UE, including two components, i.e., the azimuth $\theta_{\mathrm{su}}^{\mathrm{az}}$ and elevation $\theta_{\mathrm{su}}^{\mathrm{el}}$.
The array response vector of the satellite is formulated as
$\mathbf{a}_{\mathrm{s}}(\boldsymbol{\theta})=e^{-j \frac{2 \pi f_c}{c} \mathbf{P}_{\mathrm{s}}^{\mathrm{T}} \mathbf{t}(\boldsymbol{\theta})}$,
where
$\mathbf{P}_{\mathrm{s}}=\left[\mathbf{p}_{\mathrm{s}, 1}, \cdots, \mathbf{p}_{\mathrm{s}, N_\mathrm{s}}\right], \mathbf{t}(\boldsymbol{\theta})=\left[\mathcal{C}_{\theta^{\mathrm{az}}} \mathcal{C}_{\theta^{\mathrm{el}}}, \mathcal{S}_{\theta^{\mathrm{az}}} \mathcal{C}_{\theta^{\mathrm{el}}}, \mathcal{S}_{ \theta^{\mathrm{el}}}\right]^{\mathrm{T}}$, 
$f_c$ is the carrier frequency, $c$ is the speed of light, $\mathbf{p}_{\mathrm{s}, i}$ is the coordinate of the $i$-th antenna in the body coordinate system (BCS), and $\mathcal{C}_\theta$ and $\mathcal{S}_\theta$ denote $\cos{\theta}$ and $\sin{\theta}$, respectively.

The satellite-RIS channel can be represented by~\cite{You2020Model}
\begin{equation}
\mathbf{H}_{\mathrm{sr}}(t,f)=\alpha_{\mathrm{sr}}e^{j2\pi(t\nu_{\mathrm{sr}}-f\tau_{\mathrm{sr}})}\mathbf{a}_{\mathrm{r}}(\boldsymbol{\phi}_{\mathrm{sr}})\mathbf{a}^{\mathrm{T}}_{\mathrm{s}}(\boldsymbol{\theta}_{\mathrm{sr}}),
\end{equation}
where $\alpha_{\mathrm{sr}}$, $\nu_{\mathrm{sr}}$, and $\tau_{\mathrm{sr}}$ respectively denote the complex-valued channel gain, the Doppler shift, and the propagation delay from the satellite to the RIS, $\boldsymbol{\theta}_{\mathrm{sr}} = \left[\theta_{\mathrm{sr}}^{\mathrm{az}}, \theta_{\mathrm{sr}}^{\mathrm{el}}\right]^{\mathrm{T}}$ and $\boldsymbol{\phi}_{\mathrm{sr}} = \left[\phi_{\mathrm{sr}}^{\mathrm{az}}, \phi_{\mathrm{s r}}^{\mathrm{el}}\right]^{\mathrm{T}}$ respectively denote the AoD and angle-of-arrival (AoA) from the satellite to the RIS, and $\mathbf{a}_{\mathrm{r}}(\boldsymbol{\phi})=$ denotes the array response vector of the RIS that can be written as
$\mathbf{a}_{\mathrm{r}}(\boldsymbol{\phi})=e^{-j \frac{2 \pi f_c}{c}\mathbf{P}_{\mathrm{r}}^{\mathrm{T}} \mathbf{t}(\boldsymbol{\phi})}$,
with $\mathbf{P}_{\mathrm{r}} \in \mathbb{R}^{3 \times N_{\mathrm{r}}}$ the RIS elements' coordinates in its BCS.

Finally, assuming the RIS and UE are stationary over the transmission period, the channel between them can be expressed as~\cite{Henk2022Beamforming}
\begin{equation}
\mathbf{h}^\mathrm{T}_{\mathrm{ru}}(f)=\alpha_{\mathrm{ru}}e^{-j2\pi f\tau_{\mathrm{ru}}}\mathbf{a}^{\mathrm{T}}_{\mathrm{r}}(\boldsymbol{\phi}_{\mathrm{ru}}),
\end{equation}
where $\alpha_{\mathrm{ru}}$, $\tau_{\mathrm{ru}}$, and $\boldsymbol{\phi}_{\mathrm{ru}} =\left[\phi_{\mathrm{ru}}^{\mathrm{az}}, \phi_{\mathrm{ru}}^{\mathrm{el}}\right]^{\mathrm{T}}$ denote the complex-valued channel gain, the propagation delay from the RIS to the UE, and the AoD of the RIS, respectively.

Based on the geometric relationships shown in Fig.~\ref{fig1}, the propagation delays and Doppler shifts can be expressed as
$\tau_\mathrm{su} = \frac{\|\mathbf{p}^{\mathrm{s}} - \mathbf{p}^{\mathrm{u}}\|}{c}+\Delta,\tau_\mathrm{sru} = \tau_\mathrm{sr}+\tau_\mathrm{ru}=\frac{\|\mathbf{p}^{\mathrm{s}} - \mathbf{p}^{\mathrm{r}}\|+\|\mathbf{p}^{\mathrm{r}} - \mathbf{p}^{\mathrm{u}}\|}{c}+\Delta,
\nu_{\mathrm{su}} = \frac{\boldsymbol{v}^\mathrm{T}(\mathbf{p}^{\mathrm{u}}-\mathbf{p}^{\mathrm{s}})}{\lambda\|\mathbf{p}^{\mathrm{u}}-\mathbf{p}^{\mathrm{s}}\|},
\nu_{\mathrm{sr}} = \frac{\boldsymbol{v}^\mathrm{T}(\mathbf{p}^{\mathrm{r}}-\mathbf{p}^{\mathrm{s}})}{\lambda\|\mathbf{p}^{\mathrm{r}}-\mathbf{p}^{\mathrm{s}}\|}$,
where $\Delta$ is the unknown constant clock offset (without drift) between the satellite and UE. Please refer to, e.g.,~\cite {Zheng2023JrCUP,Zheng2022Coverage}, for the details regarding the definitions of the aforementioned angles.

\subsection{Signal Model}
We adopt orthogonal frequency division multiplexing (OFDM) for signal modulation since it is widely used in practical systems (e.g., Starlink)~\cite{Kozhaya2023Bilnd,Neinavaie2023OFDM}.
Suppose that the satellite sequentially conducts $M$ OFDM transmissions over $N$ subcarriers. The structure of the satellite signal can be acquired by using the methods in~\cite{Kozhaya2023Bilnd,Neinavaie2023OFDM}. 
For the $m$-th transmission, a multi-carrier OFDM signal $\mathbf{s}_m=[s_m[1],\ldots,s_m[N]]^{\mathrm{T}} \in \mathbb{C}^{N \times 1}$ for $m=1,\ldots,M$, is precoded by a vector $\mathbf{f}_m \in \mathbb{C}^{N_{\mathrm{s}} \times 1}$.
Here, we use an analog precoder that each entry of $\mathbf{f}_m$ (denoted as $f_{m,i}$, $i=1,\dots,N_\mathrm{s}$) is constrained to satisfy $f_{m,i}f_{m,i}^*=1/N_\mathrm{s}$~\cite{Ayach2014Spatially}.

The downlink received signal at the UE side over the $n$-th subcarrier of the $m$-th transmission can be expressed as
\begin{align}\label{eq_received_signal}
x_m[n]=\sqrt{P} \mathbf{h}^{\mathrm{T}}_{m}[n] \mathbf{f}_m s_m[n] + v_m[n],\;n=1,\ldots,N,
\end{align}
where $P$ is the transmitted power, $v_m[n]\sim \mathcal{CN}(0,\sigma^2)$ is circularly symmetric complex Gaussian noise, and $\mathbf{h}_{m}[n]$ denotes the channel vector on the $n$-th subcarrier for the $m$-th transmission, written as $\mathbf{h}_m[n] = \mathbf{h}(mT,n\Delta f)$, with $T$ being the transmission period, and $\Delta f$ being the subcarrier spacing.

\subsection{Problem Formulation}
Our problems of interest are to: ($\romannumeral+1$) derive Cram\'er-Rao bound (CRB) for the estimation of the UE position $\mathbf{p}_{\mathrm{u}}$ from the observations $\{x_m[n]\}_{\forall{m,n}}$ in~(\ref{eq_received_signal}) by employing a transformation of parameters from the unknown channel-domain parameter vector $\boldsymbol{\gamma} = [\boldsymbol{\chi}^{\mathrm{T}}_\mathrm{c},\boldsymbol{\chi}^{\mathrm{T}}_\mathrm{n}]^{\mathrm{T}} \in \mathbb{R}^{11\times 1}$ to the unknown location-domain parameter vector $\boldsymbol{\eta} = [\boldsymbol{\chi}^{\mathrm{T}}_\mathrm{p},\boldsymbol{\chi}^{\mathrm{T}}_\mathrm{n}]^{\mathrm{T}} \in \mathbb{R}^{8 \times 1}$. Here $\boldsymbol{\chi}_\mathrm{c}=[\tau_\mathrm{su}, \boldsymbol{\theta}_\mathrm{su}^{\mathrm{T}}, \tau_\mathrm{sru}, \boldsymbol{\phi}_\mathrm{ru}^{\mathrm{T}}, \nu_\mathrm{su}]^{\mathrm{T}} \in \mathbb{R}^{7\times 1}$, $\boldsymbol{\chi}_\mathrm{n}=[\Re(\alpha_\mathrm{su}), \Im(\alpha_\mathrm{su}), \Re(\alpha_\mathrm{sru}), \Im(\alpha_\mathrm{sru})]^{\mathrm{T}} \in \mathbb{R}^{4\times 1}$ with $\alpha_\mathrm{sru}=\alpha_\mathrm{sr}\alpha_\mathrm{ru}$, and $\boldsymbol{\chi}_\mathrm{p}=[\mathbf{p}_\mathrm{u}^\mathrm{T}, \Delta]^{\mathrm{T}} \in \mathbb{R}^{4\times 1}$; ($\romannumeral+2$)~design an optimal RIS profile $\boldsymbol{\omega}$ that minimizes the derived CRB.

\section{Fisher Information Analysis}
\subsection{Single LEO Satellite \& Single RIS}\label{sec:singleLEOCRB}
First, we compute the Fisher information matrix (FIM) of the channel parameter vector $\boldsymbol{\gamma}$. Since $\{x_m[n]\}_{\forall{m,n}}$ follow complex Gaussian distribution, the corresponding FIM of $\boldsymbol{\gamma}$ can be expressed using the Slepian-Bangs formula~\cite{Kay1993Slepian} as
\begin{align}\label{eq:S-B}
\small \mathbf{J}_{\boldsymbol{\gamma}} =\frac{2}{\sigma^2}\sum^M_{m=1}\sum^N_{n=1}\Re\left\{\frac{\partial \mu_m[n]}{\partial \boldsymbol{\gamma} }\left(\frac{\partial \mu_m[n]}{\partial \boldsymbol{\gamma}}\right)^\mathrm{H} \right\} \in\mathbb{R}^{11\times 11},
\end{align}
where $\mu_m[n]\!=\!\sqrt{P} \mathbf{h}^{\mathrm{T}}_{m}[n] \mathbf{f}_m s_m[n]$ is the noise-free term of the received signal in~(\ref{eq_received_signal}). 
The partial derivatives in~(\ref{eq:S-B}) are computed~as:
\begin{equation}
\resizebox{.99\hsize}{!}{$
\begin{aligned}
&\frac{\partial\mu_m[n]}{\partial \tau_\mathrm{su}} = -j2\pi n\Delta f\sqrt{P}\alpha_\mathrm{su} e^{j2\pi(mT\nu_{\mathrm{su}} - n\Delta f\tau_{\mathrm{su}})} \beta(\boldsymbol{\theta}_\mathrm{su}), \notag \\
&\frac{\partial\mu_m[n]}{\partial \theta_\mathrm{su}^\star} = \sqrt{P}\alpha_\mathrm{su} e^{j2\pi(mT\nu_{\mathrm{su}} - n\Delta f\tau_{\mathrm{su}})} \frac{\partial{\mathbf{a}}^\mathrm{T}_\mathrm{s}(\boldsymbol{\theta}_\mathrm{su})}{\partial \theta_\mathrm{su}^\star}\mathbf{f}_m s_m[n],  \notag \\
&\frac{\partial\mu_m[n]}{\partial \tau_\mathrm{sru}} = -j2\pi n\Delta f\sqrt{P}\alpha_\mathrm{sru} e^{j2\pi(mT\nu_{\mathrm{sr}} - n\Delta f\tau_{\mathrm{sru}})} \mathbf{b}^\mathrm{T}_\mathrm{r}(\boldsymbol{\phi}_\mathrm{ru})\boldsymbol{\omega}\beta(\boldsymbol{\theta}_{\mathrm{sr}}) ,  \notag \\
&\frac{\partial\mu_m[n]}{\partial \phi_\mathrm{ru}^\star} = \sqrt{P}\alpha_\mathrm{sru} e^{j2\pi(mT\nu_{\mathrm{sr}} - n\Delta f\tau_{\mathrm{sru}})}  \frac{\partial {\mathbf{b}}^\mathrm{T}_\mathrm{r}(\boldsymbol{\phi}_\mathrm{ru})}{\partial \phi_\mathrm{ru}^\star}\boldsymbol{\omega}\beta(\boldsymbol{\theta}_{\mathrm{sr}}),  \notag \\
&\frac{\partial\mu_m[n]}{\partial \nu_\mathrm{su}} = j2\pi mT\sqrt{P}\alpha_\mathrm{su} e^{j2\pi(mT\nu_{\mathrm{su}} - n\Delta f\tau_{\mathrm{su}})} \beta(\boldsymbol{\theta}_{\mathrm{su}}), \notag\\
&\frac{\partial\mu_m[n]}{\partial \Lambda({\alpha_\mathrm{su}})} = l_{\Lambda}\sqrt{P} e^{j2\pi(mT\nu_{\mathrm{su}} - n\Delta f\tau_{\mathrm{su}})}  \beta(\boldsymbol{\theta}_{\mathrm{su}}), \notag \\
&\frac{\partial\mu_m[n]}{\partial \Lambda({\alpha_\mathrm{sru}})} = l_{\Lambda}\sqrt{P} e^{j2\pi(mT\nu_{\mathrm{sr}} - n\Delta f\tau_{\mathrm{sru}})}  \mathbf{b}^\mathrm{T}_\mathrm{r}(\boldsymbol{\phi}_\mathrm{ru})\boldsymbol{\omega}\beta(\boldsymbol{\theta}_{\mathrm{sr}}), \notag \\
\end{aligned}
$}
\end{equation}
where $\star \in \{\mathrm{az},\mathrm{el}\}$, $\Lambda \in \{\Re,\Im\}$, $l_{\Re} = 1$, $l_{\Im} = j$,
$\beta(\boldsymbol{\theta})= \mathbf{a}^\mathrm{T}_\mathrm{s}(\boldsymbol{\theta})\mathbf{f}_m s_m[n]$, and $\mathbf{b}_\mathrm{r}(\boldsymbol{\phi})=\mathbf{a}_\mathrm{r}(\boldsymbol{\phi})\odot \mathbf{a}_\mathrm{r}(\boldsymbol{\phi}_\mathrm{sr})$ with $\odot$ denoting the Hadamard product.

We then transform the FIM of the channel parameters $\boldsymbol{\gamma}$ to the location parameters $\boldsymbol{\eta}$ by using the transformation matrix $\mathbf{T}=\frac{\partial\boldsymbol{\gamma}^\mathrm{T}}{\partial \boldsymbol{\eta}}\in\mathbb{R}^{8\times 11}$, and the resulting FIM of $\boldsymbol{\eta}$ is given by $\mathbf{J}_{\boldsymbol{\eta}}=\mathbf{T}\mathbf{J}_{\boldsymbol{\gamma}}\mathbf{T}^\mathrm{T}\in\mathbb{R}^{8 \times 8}$. 
We adopt $\text{PEB} \triangleq \sqrt{\text{tr}\big(\big[\mathbf{J}_{\boldsymbol{\eta}}^{-1}\big]_{1:3,1:3}\big)}$ as the performance metric that evaluates the lower bound of the estimation root mean squared error (RMSE) of $\mathbf{p}_{\mathrm{u}}$.

\subsection{Multiple LEO Satellites \& Multiple RISs}
\label{sec:multLEO}
Suppose there are $K$ LEO satellites and $L$ RISs available for cooperative localization.
The channel-domain parameters related to the $k$-th LEO satellite can be summarized as $\boldsymbol{\gamma}_k=[\boldsymbol{\gamma}_{\mathrm{LoS},k}^\mathrm{T},\boldsymbol{\gamma}_{\mathrm{RIS},k,1}^\mathrm{T},\dots,\boldsymbol{\gamma}_{\mathrm{RIS},k,L}^\mathrm{T}]^\mathrm{T}$.
Here, $\boldsymbol{\gamma}_{\mathrm{LoS},k}$ contains the channel parameters of the channel between the $k$-th satellite and the UE (i.e., $\tau_\mathrm{su}, \boldsymbol{\theta}_\mathrm{su},  \nu_\mathrm{su},\Re(\alpha_\mathrm{su}),\Im(\alpha_\mathrm{su})$), while $\boldsymbol{\gamma}_{\mathrm{RIS},k,\ell}$ contains the channel parameters of the cascaded channel between the $k$-th satellite, the $\ell$-th RIS, and the UE (i.e., $\tau_\mathrm{sru},\boldsymbol{\phi}_\mathrm{ru},\Re(\alpha_\mathrm{sru}),\Im(\alpha_\mathrm{sru})$). 
Similarly, the FIM of the channel parameters related to the $k$-th satellite, $\mathbf{J}_{\boldsymbol{\gamma}_k}$, can be computed via~\eqref{eq:S-B}.
Then, the FIM of the total channel-domain parameters is given by $\mathbf{J}_{\boldsymbol{\gamma},\mathrm{tot}}=\mathrm{blkdiag}\left\{\mathbf{J}_{\boldsymbol{\gamma}_1},\mathbf{J}_{\boldsymbol{\gamma}_2},\dots,\mathbf{J}_{\boldsymbol{\gamma}_K}\right\}$ and the total transformation matrix is given by $\mathbf{T}_\mathrm{tot}=\big[\frac{\partial\boldsymbol{\gamma}_1^\mathrm{T}}{\partial \boldsymbol{\eta}},\frac{\partial\boldsymbol{\gamma}_2^\mathrm{T}}{\partial \boldsymbol{\eta}},\dots,\frac{\partial\boldsymbol{\gamma}_K^\mathrm{T}}{\partial \boldsymbol{\eta}}\big]$. Finally, the FIM of the unknown location parameters is given by $\mathbf{J}_{\boldsymbol{\eta}}=\mathbf{T}_\mathrm{tot}\mathbf{J}_{\boldsymbol{\gamma},\mathrm{tot}}\mathbf{T}_\mathrm{tot}^\mathrm{T}$.

\section{RIS Beamforming Design}
For a single satellite-RIS pair, this section proposes an optimal RIS beamforming design. 
Since there exists no common CPU to control and coordinate between the RIS and satellites, the joint design of the satellite precoders and RIS phase profiles (as reported in~\cite{Henk2022Beamforming}) is infeasible. To this end, we adopt random precoders for the LEO satellite and optimize the RIS profile $\boldsymbol{\omega}$ to achieve optimal localization performance. Since we fixed the RIS profile over transmissions, the codebook solution for RIS profile design in~\cite{Henk2022Beamforming} is also not applicable, motivating a distinct solution for LEO satellite localization.

Define matrix $\mathbf{B}=\left[\mathbf{b}_{\mathrm{r}}^*(\boldsymbol{\phi}_\mathrm{ru}),\frac{\partial \mathbf{b}^*_\mathrm{r}\left(\boldsymbol{\phi}_\mathrm{ru}\right)}{\partial \phi_\mathrm{ru}^\mathrm{az}},\frac{\partial \mathbf{b}^*_\mathrm{r}\left(\boldsymbol{\phi}_\mathrm{ru}\right)}{\partial \phi_\mathrm{ru}^\mathrm{el}}\right]$. For any $\boldsymbol{\omega}$, we can perform an orthogonal decomposition as
\begin{equation}
    \boldsymbol{\omega} = \mathbf{\Pi}_\mathbf{B}\boldsymbol{\omega} + \mathbf{\Pi}_\mathbf{B}^\perp\boldsymbol{\omega},
\end{equation}
where $\mathbf{\Pi}_\mathbf{B}\triangleq\mathbf{B}(\mathbf{B}^\mathrm{H}\mathbf{B})^{-1}\mathbf{B}^\mathrm{H}$ denotes the orthogonal projector onto the column space of $\mathbf{B}$ and $\mathbf{\Pi}_\mathbf{B}^\perp\triangleq\mathbf{I}-\mathbf{\Pi}_\mathbf{B}$. Following a similar routine as~\cite{Henk2022Beamforming}, we present the following remark and proposition.
\begin{remark}
    The FIM $\mathbf{J}_{\boldsymbol{\gamma}}$ in~\eqref{eq:S-B} does not depend on the component $\mathbf{\Pi}_\mathbf{B}^\perp\boldsymbol{\omega}$. This can be verified by observing the expressions of $\frac{\partial \mu_m[n]}{\partial \boldsymbol{\gamma}}$ in Section~\ref{sec:singleLEOCRB}, where the dependence of $\mathbf{J}_{\boldsymbol{\gamma}}$ on $\boldsymbol{\omega}$ is only through the elements of $\mathbf{B}^\mathrm{H}\boldsymbol{\omega}$.
\end{remark}
\begin{proposition}
    Under the constraint $\|\boldsymbol{\omega}\|_\infty\leq 1$ in~\eqref{eq:omegaconstraint}, the optimal $\boldsymbol{\omega}$ that minimizes PEB must satisfy $\|\mathbf{\Pi}_\mathbf{B}\boldsymbol{\omega}\|_\infty=1$.
\end{proposition}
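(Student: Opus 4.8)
The plan is to argue by contradiction, taking the Remark as the starting point. Since $\mathbf{\Pi}_\mathbf{B}$ is an orthogonal projector we have $\mathbf{B}^\mathrm{H}\mathbf{\Pi}_\mathbf{B}^\perp=\mathbf{0}$, hence $\mathbf{B}^\mathrm{H}\boldsymbol{\omega}=\mathbf{B}^\mathrm{H}\mathbf{\Pi}_\mathbf{B}\boldsymbol{\omega}$; combined with the Remark this shows that $\mathbf{J}_{\boldsymbol{\gamma}}$, and therefore $\mathbf{J}_{\boldsymbol{\eta}}=\mathbf{T}\mathbf{J}_{\boldsymbol{\gamma}}\mathbf{T}^\mathrm{T}$ and $\text{PEB}$, depend on $\boldsymbol{\omega}$ only through the projected vector $\mathbf{u}\triangleq\mathbf{\Pi}_\mathbf{B}\boldsymbol{\omega}$. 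First I would suppose that some PEB-minimizing $\boldsymbol{\omega}^\star$ violates the claim, i.e. $c\triangleq\|\mathbf{\Pi}_\mathbf{B}\boldsymbol{\omega}^\star\|_\infty<1$, and then construct a strictly better feasible point, contradicting optimality.

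The competitor I would use is $\boldsymbol{\omega}'\triangleq\mathbf{\Pi}_\mathbf{B}\boldsymbol{\omega}^\star/c$, obtained by discarding the PEB-irrelevant orthogonal part $\mathbf{\Pi}_\mathbf{B}^\perp\boldsymbol{\omega}^\star$ and rescaling. This point is feasible because $\|\boldsymbol{\omega}'\|_\infty=\|\mathbf{\Pi}_\mathbf{B}\boldsymbol{\omega}^\star\|_\infty/c=1\leq1$, and its projection $\mathbf{\Pi}_\mathbf{B}\boldsymbol{\omega}'=\mathbf{u}/c$ satisfies $\|\mathbf{\Pi}_\mathbf{B}\boldsymbol{\omega}'\|_\infty=1$. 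Thus $\boldsymbol{\omega}'$ realizes the same projected direction as $\boldsymbol{\omega}^\star$ but amplified by the factor $t\triangleq1/c>1$, and it only remains to show that this amplification strictly reduces $\text{PEB}$.

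To control the effect of the amplification I would establish a scaling law for the FIM. Inspecting the derivatives in Section~\ref{sec:singleLEOCRB}, every partial derivative tied to a cascaded (satellite--RIS--UE) parameter, namely $\tau_\mathrm{sru}$, $\boldsymbol{\phi}_\mathrm{ru}$, $\Re(\alpha_\mathrm{sru})$ and $\Im(\alpha_\mathrm{sru})$, enters $\mu_m[n]$ only through the linear forms $\mathbf{b}^\mathrm{T}_\mathrm{r}\boldsymbol{\omega}$ and $(\partial\mathbf{b}^\mathrm{T}_\mathrm{r}/\partial\phi_\mathrm{ru}^\star)\boldsymbol{\omega}$, i.e. through the entries of $\mathbf{B}^\mathrm{H}\boldsymbol{\omega}$, whereas the line-of-sight derivatives are independent of $\boldsymbol{\omega}$. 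Hence replacing $\mathbf{u}$ by $t\mathbf{u}$ multiplies precisely the cascaded rows of the Jacobian by $t$, giving $\mathbf{J}_{\boldsymbol{\gamma}}(t)=\mathbf{D}_t\mathbf{J}_{\boldsymbol{\gamma}}\mathbf{D}_t$, with $\mathbf{D}_t$ diagonal, equal to $t$ on the cascaded coordinates and $1$ elsewhere. A convenient intermediate step is to marginalize the four complex-gain nuisances by a Schur complement: the scalings attached to the cascaded-gain rows and columns cancel, so the equivalent FIM of the seven geometric parameters inherits the same law $\mathbf{J}^e_g(t)=\mathbf{D}^g_t\mathbf{J}^e_g\mathbf{D}^g_t$, where $\mathbf{D}^g_t$ now amplifies only the cascaded-geometric block $\{\tau_\mathrm{sru},\boldsymbol{\phi}_\mathrm{ru}\}$; applying the fixed $\mathbf{T}$ then yields the position information $\mathbf{J}^e_{\boldsymbol{p}}(t)=\mathbf{Q}_0+t\mathbf{Q}_1+t^2\mathbf{Q}_2$ with $\mathbf{Q}_0,\mathbf{Q}_2\succeq\mathbf{0}$ the LoS-only and cascaded-only contributions.

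The hard part is the final implication, that this amplification strictly decreases $\text{PEB}=\sqrt{\text{tr}([\mathbf{J}^{-1}_{\boldsymbol{\eta}}]_{1:3,1:3})}$. It is \emph{not} a one-line Loewner argument, since the cross term scales as $t$ while the cascaded block scales as $t^2$, so $\mathbf{D}_t\mathbf{J}_{\boldsymbol{\gamma}}\mathbf{D}_t\succeq\mathbf{J}_{\boldsymbol{\gamma}}$ need not hold. I would instead work directly with $\mathbf{J}^e_{\boldsymbol{p}}(t)$ and show $\text{tr}([\mathbf{J}^e_{\boldsymbol{p}}(t)^{-1}]_{1:3,1:3})$ is strictly smaller at $t=1/c$ than at $t=1$, relying on the dominant $t^2\mathbf{Q}_2$ term and the nondegeneracy assumption that the cascaded path supplies delay/angle information not already carried by the LoS path; either a derivative computation giving $\tfrac{d}{dt}\text{tr}([\mathbf{J}^e_{\boldsymbol{p}}(t)^{-1}]_{1:3,1:3})\leq0$ for $t\geq1$ or a direct endpoint comparison would close it. The resulting strict inequality $\text{PEB}(\boldsymbol{\omega}')<\text{PEB}(\boldsymbol{\omega}^\star)$ contradicts the optimality of $\boldsymbol{\omega}^\star$, forcing the optimal $\boldsymbol{\omega}$ to saturate the projected-modulus bound, $\|\mathbf{\Pi}_\mathbf{B}\boldsymbol{\omega}\|_\infty=1$.
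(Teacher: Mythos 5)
Your construction is exactly the paper's: assume an optimal $\boldsymbol{\omega}^\star$ with $c\triangleq\|\mathbf{\Pi}_\mathbf{B}\boldsymbol{\omega}^\star\|_\infty<1$, use the Remark to discard the PEB-irrelevant component $\mathbf{\Pi}_\mathbf{B}^\perp\boldsymbol{\omega}^\star$, rescale the projection by $1/c$ to obtain a feasible competitor whose forms $\mathbf{B}^\mathrm{H}\boldsymbol{\omega}$ are amplified by $1/c>1$, and conclude by contradiction. Where you differ is the final step. The paper disposes of it in a parenthetical---the amplified profile ``generates a lower PEB (due to scaling of $\mathbf{J}_{\boldsymbol{\gamma}}$)''---implicitly treating $\mathbf{J}_{\boldsymbol{\gamma}}\mapsto\mathbf{D}_t\mathbf{J}_{\boldsymbol{\gamma}}\mathbf{D}_t$ (with $t=1/c$ and $\mathbf{D}_t$ equal to $t$ on cascaded coordinates, $1$ elsewhere) as self-evidently beneficial. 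Your objection that this is not a one-line Loewner argument is correct: for $\mathbf{J}=\bigl(\begin{smallmatrix}1&1\\1&1\end{smallmatrix}\bigr)$ and $\mathbf{D}_t=\mathrm{diag}(1,t)$ one has $\det(\mathbf{D}_t\mathbf{J}\mathbf{D}_t-\mathbf{J})=-(t-1)^2<0$, so $\mathbf{D}_t\mathbf{J}\mathbf{D}_t\not\succeq\mathbf{J}$; the culprit is precisely the LoS--cascaded cross-information, which scales as $t$ rather than $t^2$. Your Schur-complement reduction is also sound: the $t$'s attached to the cascaded-gain rows and columns cancel inside $\mathbf{J}_{gg}-\mathbf{J}_{gn}\mathbf{J}_{nn}^{-1}\mathbf{J}_{ng}$, so the equivalent geometric FIM obeys the same scaling law and $\mathbf{J}^e_{\boldsymbol{p}}(t)=\mathbf{Q}_0+t\mathbf{Q}_1+t^2\mathbf{Q}_2$ with $\mathbf{Q}_0,\mathbf{Q}_2\succeq\mathbf{0}$. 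In this sense your write-up reproduces the published argument while correctly isolating the one step the paper asserts without proof.

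Be aware, though, that you have not closed that step either, and your two suggested closers are not routine. Along the derivative route, the natural sufficient condition is $\mathbf{Q}_1+2t\mathbf{Q}_2\succeq\mathbf{0}$; the only a priori control on the cross term is the discriminant bound $|\mathbf{x}^\mathrm{T}\mathbf{Q}_1\mathbf{x}|\leq 2\sqrt{(\mathbf{x}^\mathrm{T}\mathbf{Q}_0\mathbf{x})(\mathbf{x}^\mathrm{T}\mathbf{Q}_2\mathbf{x})}$ (which follows from $\mathbf{J}^e_{\boldsymbol{p}}(t)\succeq\mathbf{0}$ for all real $t$), and that yields $\mathbf{Q}_1+2t\mathbf{Q}_2\succeq\mathbf{0}$ only when $t^2\mathbf{Q}_2\succeq\mathbf{Q}_0$, i.e.\ when the cascaded path already dominates the LoS path---the opposite of the typical regime. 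Likewise, ``the dominant $t^2\mathbf{Q}_2$ term'' does not help when $c$ is close to $1$, since then $t=1/c$ is barely above $1$. So a genuine extra hypothesis is needed, the cleanest being near-orthogonality of the two paths' phase signatures over the $(m,n)$ grid so that $\mathbf{Q}_1$ vanishes (physically plausible here because the LoS and cascaded components carry distinct delay--Doppler phase ramps); under $\mathbf{Q}_1=\mathbf{0}$, $\mathbf{J}^e_{\boldsymbol{p}}(t)=\mathbf{Q}_0+t^2\mathbf{Q}_2$ is Loewner-increasing in $t$ and the PEB strictly decreases whenever $\mathbf{Q}_2$ contributes to the position block. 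With such an assumption stated explicitly, your proof is complete---and it is then strictly more rigorous than the paper's, which silently relies on the same fact.
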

\begin{proof}
We prove Proposition~1 via contradiction. Suppose there is an optimal $\boldsymbol{\omega}^\divideontimes=\mathbf{\Pi}_\mathbf{B}\boldsymbol{\omega}^\divideontimes + \mathbf{\Pi}_\mathbf{B}^\perp\boldsymbol{\omega}^\divideontimes$, $\|\boldsymbol{\omega}^\divideontimes\|_\infty\leq 1$, and $\|\mathbf{\Pi}_\mathbf{B}\boldsymbol{\omega}^\divideontimes\|_\infty<1$. Based on Remark~1, we have 
\begin{equation}
\resizebox{.55\hsize}{!}{$
\begin{aligned}
\mathbf{b}_{\mathrm{r}}^\mathrm{T}(\boldsymbol{\phi}_\mathrm{ru})\boldsymbol{\omega}^\divideontimes &= \mathbf{b}_{\mathrm{r}}^\mathrm{T}(\boldsymbol{\phi}_\mathrm{ru})\mathbf{\Pi}_\mathbf{B}\boldsymbol{\omega}^\divideontimes,\\
\frac{\partial \mathbf{b}^\mathrm{T}_\mathrm{r}\left(\boldsymbol{\phi}_\mathrm{ru}\right)}{\partial \phi_\mathrm{ru}^\mathrm{az}}\boldsymbol{\omega}^\divideontimes &= \frac{\partial \mathbf{b}^\mathrm{T}_\mathrm{r}\left(\boldsymbol{\phi}_\mathrm{ru}\right)}{\partial \phi_\mathrm{ru}^\mathrm{az}}\mathbf{\Pi}_\mathbf{B}\boldsymbol{\omega}^\divideontimes,\\
\frac{\partial \mathbf{b}^\mathrm{T}_\mathrm{r}\left(\boldsymbol{\phi}_\mathrm{ru}\right)}{\partial \phi_\mathrm{ru}^\mathrm{el}}\boldsymbol{\omega}^\divideontimes &= \frac{\partial \mathbf{b}^\mathrm{T}_\mathrm{r}\left(\boldsymbol{\phi}_\mathrm{ru}\right)}{\partial \phi_\mathrm{ru}^\mathrm{el}}\mathbf{\Pi}_\mathbf{B}\boldsymbol{\omega}^\divideontimes.
\end{aligned}
$}
\end{equation}
Then we can construct an alternative solution as $\boldsymbol{\omega}^\circledast=\frac{1}{\|\mathbf{\Pi}_\mathbf{B}\boldsymbol{\omega}^\divideontimes\|_\infty} \mathbf{\Pi}_\mathbf{B}\boldsymbol{\omega}^\divideontimes$ with $\|\boldsymbol{\omega}^\circledast\|_\infty = 1$ that satisfies the constraint in~\eqref{eq:omegaconstraint}, which gives
\begin{equation}
\resizebox{.99\hsize}{!}{$
\begin{aligned}
\mathbf{b}_{\mathrm{r}}^\mathrm{T}(\boldsymbol{\phi}_\mathrm{ru})\boldsymbol{\omega}^\circledast &= \mathbf{b}_{\mathrm{r}}^\mathrm{T}(\boldsymbol{\phi}_\mathrm{ru})\mathbf{\Pi}_\mathbf{B}\boldsymbol{\omega}^\circledast=\frac{1}{\|\mathbf{\Pi}_\mathbf{B}\boldsymbol{\omega}^\divideontimes\|_\infty}\mathbf{b}_{\mathrm{r}}^\mathrm{T}(\boldsymbol{\phi}_\mathrm{ru})\mathbf{\Pi}_\mathbf{B}\boldsymbol{\omega}^\divideontimes,\nonumber\\
\frac{\partial \mathbf{b}^\mathrm{T}_\mathrm{r}\left(\boldsymbol{\phi}_\mathrm{ru}\right)}{\partial \phi_\mathrm{ru}^\mathrm{az}}\boldsymbol{\omega}^\circledast &= \frac{\partial \mathbf{b}^\mathrm{T}_\mathrm{r}\left(\boldsymbol{\phi}_\mathrm{ru}\right)}{\partial \phi_\mathrm{ru}^\mathrm{az}}\mathbf{\Pi}_\mathbf{B}\boldsymbol{\omega}^\circledast=\frac{1}{\|\mathbf{\Pi}_\mathbf{B}\boldsymbol{\omega}^\divideontimes\|_\infty}\frac{\partial \mathbf{b}^\mathrm{T}_\mathrm{r}\left(\boldsymbol{\phi}_\mathrm{ru}\right)}{\partial \phi_\mathrm{ru}^\mathrm{az}}\mathbf{\Pi}_\mathbf{B}\boldsymbol{\omega}^\divideontimes,\nonumber \\
\frac{\partial \mathbf{b}^\mathrm{T}_\mathrm{r}\left(\boldsymbol{\phi}_\mathrm{ru}\right)}{\partial \phi_\mathrm{ru}^\mathrm{el}}\boldsymbol{\omega}^\circledast &= \frac{\partial \mathbf{b}^\mathrm{T}_\mathrm{r}\left(\boldsymbol{\phi}_\mathrm{ru}\right)}{\partial \phi_\mathrm{ru}^\mathrm{el}}\mathbf{\Pi}_\mathbf{B}\boldsymbol{\omega}^\circledast=\frac{1}{\|\mathbf{\Pi}_\mathbf{B}\boldsymbol{\omega}^\divideontimes\|_\infty}\frac{\partial \mathbf{b}^\mathrm{T}_\mathrm{r}\left(\boldsymbol{\phi}_\mathrm{ru}\right)}{\partial \phi_\mathrm{ru}^\mathrm{el}}\mathbf{\Pi}_\mathbf{B}\boldsymbol{\omega}^\divideontimes.\nonumber
\end{aligned}
$}
\end{equation}
Since $\frac{1}{\|\mathbf{\Pi}_\mathbf{B}\boldsymbol{\omega}^\divideontimes\|_\infty}>1$, $\boldsymbol{\omega}^\circledast$ generates a lower PEB than $\boldsymbol{\omega}^\divideontimes$ (due to scaling of $\mathbf{J}_{\boldsymbol{\gamma}}$), thus $\boldsymbol{\omega}^\divideontimes$ cannot be the optimal solution, which completes the proof. 
\end{proof} 

Based on Remark~1 and Proposition~1, the optimal RIS profile lies in the column space of $\mathbf{B}$ and can be formulated~as
\begin{align}\label{eq:minPEB}
    \min_{c_1,c_2,c_3}&\quad \text{PEB}(\boldsymbol{\omega})\\
    \text{s.t.}&\quad \boldsymbol{\omega}=c_1\mathbf{b}_{\mathrm{r}}^*(\boldsymbol{\phi}_\mathrm{ru})+c_2\frac{\partial \mathbf{b}^*_\mathrm{r}\left(\boldsymbol{\phi}_\mathrm{ru}\right)}{\partial \phi_\mathrm{ru}^\mathrm{az}}+c_3 \frac{\partial \mathbf{b}^*_\mathrm{r}\left(\boldsymbol{\phi}_\mathrm{ru}\right)}{\partial \phi_\mathrm{ru}^\mathrm{el}},\nonumber\\
    &\quad\|\boldsymbol{\omega}\|_\infty = 1,\nonumber
\end{align}
where $(c_1, c_2, c_3)$ indicates the coordinate of $\boldsymbol{\omega}$ in the column space of $\mathbf{B}$, and~\eqref{eq:minPEB} can be solved by adopting a grid search procedure. Note that solving~\eqref{eq:minPEB} requires knowledge of $\boldsymbol{\phi}_\mathrm{ru}$.
In practice, we can assume that a rough estimate of $\mathbf{p}_{\mathrm{u}}$ is available, using which a rough $\boldsymbol{\phi}_\mathrm{ru}$ can be obtained.

\section{Numerical Results}

This section presents the simulation results to evaluate the performance of the RIS-aided LEO satellite localization. 
We consider a scenario comprising a UE located at an unknown position $\mathbf{p}_{\mathrm{u}}=[0,0,0]^\mathrm{T}$~m, $L$ RISs placed at known positions $\mathbf{p}_{\mathrm{r},\ell}$ ($\ell=1,\dots,L$), and $K$ LEO satellites at known locations $\mathbf{p}_{\mathrm{s},k}$ ($k=1,\ldots,K$) moving with a velocity $\boldsymbol{v}=[5.5,5.5,0]^\mathrm{T}$~km/s. We set  $\mathbf{p}_{\mathrm{r},\ell}=\mathbf{p}_{\mathrm{r},0}+(\ell-1)\mathbf{d}_\mathrm{r}$, $\mathbf{p}_{\mathrm{s},k}=\mathbf{p}_{\mathrm{s},0}+(k-1)\mathbf{d}_\mathrm{s}$, where $\mathbf{p}_{\mathrm{r},0}=[60,10,30]^\mathrm{T}$~m, $\mathbf{p}_{\mathrm{s},0}=[-100,100,550]^\mathrm{T}$~km, $\mathbf{d}_\mathrm{r}=[0,20,0]^\mathrm{T}$~m, and $\mathbf{d}_\mathrm{s}=[-30,30,-5]^\mathrm{T}$~km. When simulating the terrestrial BS-based localization, the LEO satellites are replaced by a BS placed at $\mathbf{p}_{\mathrm{b}}=[-100,100,50]^\mathrm{T}$~m with zero velocity.
Each of the LEO satellites and BS is equipped with a $2\times 2$ array of antenna. We consider $M=128$ transmissions of pilot signals with transmission period $T=10$~ms. OFDM modulation with $N=3300$ subcarriers centered around a carrier frequency $f_c=12.7$~GHz and a bandwidth $B=240$~MHz is used. The clock offset is set as $\Delta=100$~ns.

We first compare the proposed beamforming design with the existing alternatives, i.e., random and directional beamforming~\cite{Henk2021directional}. Using a single LEO satellite, the comparisons are conducted by varying the number of RISs and their UPAs' size. As shown in Fig.~\ref{Fig_Biasvsd_1}, the PEB based on all three beamforming schemes decreases as the RIS size increases. Besides, the more RISs we deploy, the lower the PEB. Under all tested scenarios, the proposed beamforming design consistently outperforms the other two benchmarks, delivering considerably lower PEB. This reveals that the proposed RIS beamforming design can effectively improve LEO satellite-based localization accuracy.

Fig.~\ref{Fig_Biasvsd_2} demonstrates the PEB evaluation results using a terrestrial BS and $K$ (varying from 1 to 17) LEO satellites based on random, directional, and proposed beamforming designs. Here the number of RISs is fixed as 3, each equipped with $N_\mathrm{r}=10 \times 10$ elements. It is clearly shown that the proposed beamforming design can offer better localization performance than the other two beamforming schemes. The terrestrial BS is used as a benchmark to compare the localization performance based on terrestrial and non-terrestrial networks. Under the considered configurations, the performance of the RIS-aided LEO satellite localization cannot reach a comparable level to that of terrestrial localization. However, the performance of the proposed localization paradigm improves as the number of available LEO satellites increases, delivering meter- or even sub-meter-level localization accuracy. Given the abundant satellite resources, the results in Fig.~\ref{Fig_Biasvsd_2} reveal the potential of LEO satellite localization.

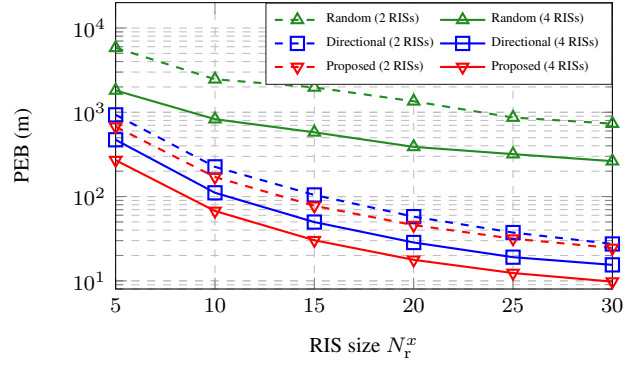
\begin{figure}[t]
    \centering
    \definecolor{ForestGreen}{rgb}{0.1333    0.5451    0.1333}
	\begin{tikzpicture}
\begin{axis}[%
width=2.6in, height=1.5in, at={(0,0)}, scale only axis,
xmin=5, xmax=30, 
xlabel style={font=\color{white!15!black},font=\footnotesize},
xticklabel style = {font=\color{white!15!black},font=\footnotesize},
xlabel={RIS size $N_\mathrm{r}^x$},
ymode=log,
ymin=8, ymax=20000,
yminorticks=true,
ylabel style={font=\color{white!15!black},font=\footnotesize},
yticklabel style = {font=\color{white!15!black},font=\footnotesize},
ylabel={PEB (m)},
axis background/.style={fill=white},
xmajorgrids,
ymajorgrids,
yminorgrids,
grid style={dashed},
legend style={at={(1,1)}, anchor=north east, legend cell align=left, align=left, font=\tiny, legend columns=2, draw=white!15!black}
]
\addplot [color=ForestGreen, line width=0.8pt, dashed, mark=triangle, mark options={solid, ForestGreen}, mark size=2.5pt]
  table[row sep=crcr]{%
5	5819.44904087368\\
10	2465.40817664247\\
15	1974.16648831318\\
20	1350.91849317878\\
25	866.509461910742\\
30	733.074652199118\\
};
\addlegendentry{Random (2 RISs)}

\addplot [color=ForestGreen, line width=0.8pt, mark=triangle, mark options={solid, ForestGreen}, mark size=2.5pt]
  table[row sep=crcr]{%
5	1816.56324013215\\
10	828.613709367751\\
15	578.451716011278\\
20	388.68112486249\\
25	318.80605571032\\
30	263.183229158896\\
};
\addlegendentry{Random (4 RISs)}

\addplot [color=blue, line width=0.8pt, dashed, mark=square, mark options={solid, blue}, mark size=2.5pt]
  table[row sep=crcr]{%
5	932.347518140475\\
10	225.107621232917\\
15	104.470684064286\\
20	57.7615301129481\\
25	37.3955070258747\\
30	27.396608627684\\
};
\addlegendentry{Directional (2 RISs)}

\addplot [color=blue, line width=0.8pt, mark=square, mark options={solid, blue}, mark size=2.5pt]
  table[row sep=crcr]{%
5	471.894213786977\\
10	110.858126719359\\
15	49.8163449858334\\
20	28.5192269680778\\
25	19.0921849622307\\
30	15.47611429105\\
};
\addlegendentry{Directional (4 RISs)}

\addplot [color=red, line width=0.8pt, dashed, mark=triangle, mark options={solid, rotate=180, red}, mark size=2.5pt]
  table[row sep=crcr]{%
5	674.693139065936\\
10	170.610853544582\\
15	77.7413340964761\\
20	45.8109730324483\\
25	31.6494164572451\\
30	24.5404194456621\\
};
\addlegendentry{Proposed (2 RISs)}

\addplot [color=red, line width=0.8pt, mark=triangle, mark options={solid, rotate=180, red}, mark size=2.5pt]
  table[row sep=crcr]{%
5	270.271055363533\\
10	67.4634939071008\\
15	30.3166984535957\\
20	17.759329326992\\
25	12.3743202333877\\
30	9.78628165183474\\
};
\addlegendentry{Proposed (4 RISs)}

\end{axis}
\end{tikzpicture}  
\caption{ PEB versus RIS size $N_\mathrm{r}^x$ with $N_\mathrm{r}^y=N_\mathrm{r}^x$.
  }
  \label{Fig_Biasvsd_1}
\end{figure}

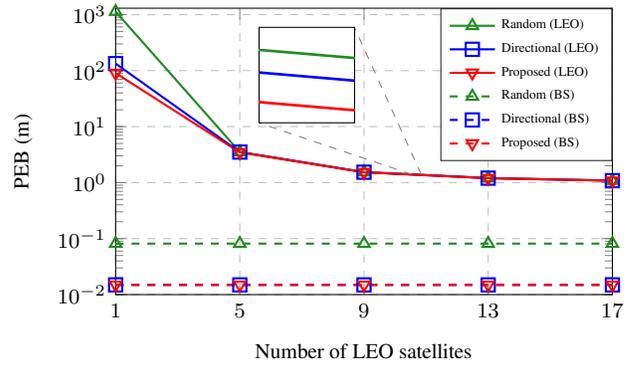
\begin{figure}[t]
    \centering
    \definecolor{ForestGreen}{rgb}{0.1333    0.5451    0.1333}
\begin{tikzpicture}
\begin{axis}[%
width=2.6in, height=1.5in, at={(0,0)}, scale only axis,
xmin=1, xmax=17,
xlabel style={font=\color{white!15!black},font=\footnotesize},
xticklabel style = {font=\color{white!15!black},font=\footnotesize},
xlabel={Number of LEO satellites},
xtick={1,5,9,13,17},
ymode=log,
ymin=0.01, ymax=1300,
yminorticks=true,
ylabel style={font=\color{white!15!black},font=\footnotesize},
yticklabel style = {font=\color{white!15!black},font=\footnotesize},
ytick={1e-2,1e-1,1,1e1,1e2,1e3,1e4},
ylabel={PEB (m)},
axis background/.style={fill=white},
xmajorgrids,
ymajorgrids,
grid style={dashed},
legend style={at={(1,1)}, anchor=north east, legend cell align=left, align=left, font=\tiny, draw=white!15!black}
]
\addplot [color=ForestGreen, line width=0.8pt, mark=triangle, mark options={solid, ForestGreen}, mark size=2.5pt]
  table[row sep=crcr]{%
1	1131.95879660419\\
5	3.51387434829026\\
9	1.53846738238373\\
13	1.20638074801396\\
17	1.08749469450172\\
};
\addlegendentry{Random (LEO)}

\addplot [color=blue, line width=0.8pt, mark=square, mark options={solid, blue}, mark size=2.5pt]
  table[row sep=crcr]{%
1	133.499180189332\\
5	3.50868601246847\\
9	1.53433630703868\\
13	1.20110144389551\\
17	1.08291646942927\\
};
\addlegendentry{Directional (LEO)}

\addplot [color=red, line width=0.8pt, mark=triangle, mark options={solid, rotate=180, red}, mark size=2.5pt]
  table[row sep=crcr]{%
1	90.3014594988886\\
5	3.46424125846131\\
9	1.52729609481913\\
13	1.19598635700331\\
17	1.07765055975128\\
};
\addlegendentry{Proposed (LEO)}

\addplot [color=ForestGreen, line width=0.8pt, dashed, mark=triangle, mark options={solid, ForestGreen}, mark size=2.5pt]
  table[row sep=crcr]{%
1	0.0813705387726959\\
5	0.0813705387726959\\
9	0.0813705387726959\\
13	0.0813705387726959\\
17	0.0813705387726959\\
};
\addlegendentry{Random (BS)}

\addplot [color=blue, line width=0.8pt, dashed, mark=square, mark options={solid, blue}, mark size=2.5pt]
  table[row sep=crcr]{%
1	0.0148979434471336\\
5	0.0148979434471336\\
9	0.0148979434471336\\
13	0.0148979434471336\\
17	0.0148979434471336\\
};
\addlegendentry{Directional (BS)}

\addplot [color=red, line width=0.8pt, dashed, mark=triangle, mark options={solid, rotate=180, red}, mark size=2.5pt]
  table[row sep=crcr]{%
1	0.0148913817863124\\
5	0.0148913817863124\\
9	0.0148913817863124\\
13	0.0148913817863124\\
17	0.0148913817863124\\
};
\addlegendentry{Proposed (BS)}
\end{axis}

\draw[gray,dashed] (1.55in,0.63in) -- (0.75in,0.9in);
\draw[gray,dashed] (1.6in,0.63in) -- (1.25in,1.4in);

\begin{axis}[%
width=0.5in, height=0.5in,
at={(0.75in,0.9in)},
scale only axis,
xmin=10.49,xmax=10.51,ymin=1.39,ymax=1.41,ymode=log,
axis background/.style={fill=white},
xtick={\empty},
xticklabels={\empty},
ytick={\empty},
yticklabels={\empty}
]
\addplot [color=ForestGreen, line width=1pt, mark=triangle, mark options={solid, ForestGreen}]
  table[row sep=crcr]{%
9	1.53846738238373\\
13	1.20638074801396\\
};

\addplot [color=blue, line width=1pt, mark=square, mark options={solid, blue}]
  table[row sep=crcr]{%
9	1.53433630703868\\
13	1.20110144389551\\
};

\addplot [color=red, line width=1pt, mark=triangle, mark options={solid, rotate=180, red}]
  table[row sep=crcr]{%
9	1.52729609481913\\
13	1.19598635700331\\
};
\end{axis}	
\end{tikzpicture}%
  \caption{ PEB versus number of LEO satellites. 
  }
  \label{Fig_Biasvsd_2}
\end{figure}

\section{Conclusion}
This work investigated the performance of RIS-aided localization using LEO satellite signals. Specifically, we derived the fundamental PEB, based on which we further proposed an optimal RIS beamforming design that minimizes the derived PEB. Numerical results demonstrated the efficacy of the proposed localization-oriented beamforming design and revealed the promise of the synergy between LEO satellites and RISs for localization with meter- or even sub-meter-level accuracy. 

\bibliographystyle{IEEEbib}
\bibliography{references}

\end{document}